\newcommand{\stkout}[1]{\ifmmode\text{\sout{\ensuremath{#1}}}\else\sout{#1}\fi}
\def\pprw{8.5in} \def\pprh{11in}
\newtheorem{problem}{Problem}
\theoremstyle{empty}
\newcommand{\spara}[1]{\smallskip\noindent{\bf #1}}
\newsavebox{\mybox}
\newenvironment{framed}{\noindent\begin{lrbox}{\mybox}\begin{minipage}{.98\columnwidth}}{\end{minipage}\end{lrbox}\fbox{\usebox{\mybox}}}
\newcommand{\Card}[1]{\left\lvert#1\right\rvert}
\newcommand{\St}{;\:}
\newcommand{\skillset}{\ensuremath{{S}}\xspace}
\newcommand{\taskset}{\ensuremath{\mathcal{J}}\xspace}
\newcommand{\workerset}{\ensuremath{\mathcal{W}}\xspace}
\newcommand{\nonhiredworkers}{\ensuremath{\mathcal{F}}\xspace}
\newcommand{\numskills}{\ensuremath{m}\xspace}
\newcommand{\numworkers}{\ensuremath{n}\xspace}
\newcommand{\atask}{\ensuremath{J}\xspace}
\newcommand{\worker}[1]{\ensuremath{W^{#1}}\xspace}
\newcommand{\aworker}{\ensuremath{W}\xspace}
\newcommand{\maxhiringcost}[1]{\ensuremath{C^*}\xspace}
\newcommand{\maxoutsourcecost}[1]{\ensuremath{\lambda^*}\xspace}
\newcommand{\maxhiringcostwithsalary}[1]{\ensuremath{\widehat{C}^*}\xspace}
\newcommand{\solution}{\ensuremath{\mathcal{Q}}\xspace}
\newcommand{\pool}[1]{\ensuremath{P_{#1}}\xspace}
\newenvironment{enumerate-algo}%
{\begin{list}{\arabic{enumi}.}%
      {\setlength{\leftmargin}{2.5em}%
       \setlength{\itemsep}{-\parsep}%
       \usecounter{enumi}}%
}{\end{list}}
\begin{document}

\title{Algorithms for Fair Team Formation in Online Labour Marketplaces}
\titlenote{The research for this work has been partially supported by the 
EU FET project MULTIPLEX 317532 and 
the ERC Advanced Grant 788893 AMDROMA "Algorithmic and Mechanism Design Research in Online Markets"
}

\author{Giorgio Barnab\`o}
\affiliation{
  \institution{Sapienza University of Rome, Italy}
}
\email{barnabo@diag.uniroma1.it}

\author{Adriano Fazzone}
\affiliation{
\institution{Sapienza University of Rome, Italy}
}
\email{fazzone@diag.uniroma1.it}

\author{Stefano Leonardi}
\affiliation{
\institution{Sapienza University of Rome, Italy}
}
\email{leonardi@diag.uniroma1.it}

\author{Chris Schwiegelshohn}
\affiliation{
\institution{Sapienza University of Rome, Italy}
}
\email{schwiegelshohn@diag.uniroma1.it}

\renewcommand{\shortauthors}{Barnab\`o et al.}

\begin{abstract}
As freelancing work keeps on growing almost everywhere due to a sharp decrease in communication costs and to the widespread of Internet-based labour marketplaces (e.g., \textit{guru.com}, \textit{feelancer.com}, \textit{mturk.com}, \textit{upwork.com}), many researchers and practitioners have started exploring the benefits of outsourcing and crowdsourcing ~\cite{howe06, jeppesen10, kittur11, malone10, retelny14, surowiecki04}. Since employers often use these platforms to find a group of workers to complete a specific task, researchers have focused their efforts on the study of team formation and matching algorithms and on the design of effective incentive schemes ~\cite{lappas2009finding, anagnostopoulos2010power, anagnostopoulos12online, Anagnostopoulos2018AlgorithmsFH}. Nevertheless, just recently, several concerns have been raised on possibly unfair biases introduced through the algorithms used to carry out these selection and matching procedures. For this reason, researchers have started studying the fairness of algorithms related to these online marketplaces ~\cite{Feldman2015CertifyingAR, levy2017designing}, looking for intelligent ways to overcome the algorithmic bias that frequently arises. Broadly speaking, the aim is to guarantee that, for example, the process of hiring workers through the use of machine learning and algorithmic data analysis tools does not discriminate, even unintentionally, on grounds of nationality or gender. \\
In this short paper, we define the \textit{Fair Team Formation problem} in the following way: given an online labour marketplace where each worker possesses one or more skills, and where all workers are divided into two or more not overlapping classes (for examples, men and women), we want to design an algorithm that is able to find a team with all the skills needed to complete a given task, and that has the same number of people from all classes. \\
We provide inapproximability results for the \textit{Fair Team Formation problem} together with four algorithms for the problem itself. 
We also tested the effectiveness of our algorithmic solutions by performing experiments using real data from an online labor marketplace.
\end{abstract}

%
%
\begin{CCSXML}
<ccs2012>
<concept>
<concept_id>10003752.10003809</concept_id>
<concept_desc>Theory of computation~Design and analysis of algorithms</concept_desc>
<concept_significance>300</concept_significance>
</concept>
</ccs2012>
\end{CCSXML}

\ccsdesc[300]{Theory of computation~Design and analysis of algorithms}


\keywords{Fairness, Team Formation, Fair Set Cover, Outsourcing, Crowdsourcing}

\maketitle

\section{Introduction}\label{sec:intro}
An online labour marketplace is defined as a web application where workers can sell their services and skills in a fluid and delocalised fashion. Usually, employers pay workers hourly to complete a specific task without offering them any long-term employment arrangement. The OECD data on self-employment estimates that between 10\% and 20\% of workers in developed countries are self-employed, while it is estimated that in 2020, a full 40\% of the US workforce will be freelancers ~\cite{NextBigThin}. \\
While crowdsourcing adoption was driven, at least in part, by the assumption that problems can be decomposed into parts that can be addressed separately by independent workers, recent work suggests that crowdsourcing results can be improved by allowing some degree of collaboration among them~\cite{majchrzak2013towards, Riedl2016}. The idea of combining collaboration with crowdsourcing has led to research on \emph{Team Formation}~\cite{anagnostopoulos2010power, anagnostopoulos12online, an13finding, dorn10composing, gajewar12multiskill, golshan14profit, kargar11teamexp, lappas2009finding, li10team, sozio10community, majumder12capacitated}, in which a common thread is the need for complementary skills, and definitions differ in aspects such as objectives (e.g., load balancing and/or compatibility), constraints (e.g., worker capacity), and algorithmic set-up (online or offline). \\
As previously mentioned, these online marketplaces are largely managed through automatic algorithms designed to match supply and demand. Nevertheless, the objective of optimising a given task, which these algorithms are usually based on, goes openly against the need to ensure fairness and diversity, for example, in the composition of groups. We define \textit{unfair discrimination} as treating someone differently on the base of his group membership, and not his merit. Since algorithms are "\textit{black boxes}" usually protected by industrial secrecy, legal protections and even intentional obfuscation, most of the times discrimination becomes invisible, and mitigation impossible~\cite{Hajian:2016:ABD:2939672.2945386}. For this reason, data scientists and researchers have developed the \textit{disparate impact} theory~\cite{Feldman2015CertifyingAR} whose aim is to spot unintended discrimination in algorithms outcomes. Among the many different sources of the bias on the Web, the one that directly concerns us in the research of a solution for the \textit{Fair Team Formation problem} is the \textit{algorithmic bias}, that occurs when the bias is added by the algorithm itself or by the way this algorithm manages the bias present in the data it crunches.

\spara{Overview of problem setting and assumptions.} In our framework, both workers and tasks are represented by sets of skills. Each skill of the task is possessed by at least one worker, while each worker has a defined cost and belongs alternately to one of two classes. In this setting, we consider the problem of finding the cheapest team of workers that together have all the necessary skills to complete the task, and that is made up of the same number of workers from both classes (fairness constrains). We call this general problem \textit{Fair Team Formation}, which we formally define in Section 2 and solve in Sections 3-4.

\spara{Algorithmic techniques.} To the best of our knowledge, we are the first to consider the \textit{Fair Team Formation} problem, namely the weighed Set Cover problem with some fairness constraints imposed. As shown in section 3, the \textit{Fair Team Formation} is NP-hard and inapproximable, for this reason the only thing we could do was to look for some algorithms that would function well in practical situations. Now, considering that our problem is closely related to the Set Cover problem ~\cite[Chapter~35]{cormen2009introduction}, it seemed natural to start from a reasoning similar to the one behind the Greedy Set Cover algorithm ~\cite{slavik1997tight}. In the next sections, then, we will present four algorithms we developed to solve the \textit{Fair Team Formation} problem: the first three are partially based on the Greedy Set Cover algorithm, while the fourth is a rounding algorithm based on the linear programming formulation of the Fair Team Formation problem. \\
Furthermore, since we are not able to calculate the value of the optimal solution in reasonable time, we have built a lower-bound for the cost of the optimal solution of the \textit{Fair Team Formation} problem by solving the relaxed Linear Programming formulation of our problem. This lower-bound came in handy when we had to evaluate our algorithms performance.  

\spara{Contributions.}
The key contributions of our work are:
\begin{compactitem}

\item We formalise the \textit{Fair Team Formation} problem, which is the problem  
of finding the cheapest team that can complete the task and, at the same time, that counts the same number of people from two not overlapping classes.

\item We design four algorithms for solving our problem.

\item We experiment on real data based on actual task requirements and worker skills from one of the largest online labor marketplaces, testing algorithms under a broad range of conditions.

\end{compactitem}

\section{Preliminaries}\label{sec:preliminaries}
In this section, we formally describe our setting and problem, and provide some necessary background.

\subsection{Notation and Setting}

\spara{Skills.} We consider a set $\skillset$ of skills with $\Card{\skillset}=\numskills$. Skills can be any kind of qualification a worker can have or a task may require, such as \emph{video editing}, \emph{technical writing}, or \emph{project management}.

\spara{Tasks.} We consider a set of $\taskset$ tasks (or jobs). Each task $\atask\in\taskset$ is independent, and requires a set of skills from $\skillset$, therefore, $\atask\subseteq \skillset$. In our setting we do not consider a streaming of tasks, but rather we take each task as a single instance of the problem. 

\spara{Workers.} Throughout we assume that we have a set $\workerset$ of $\numworkers$ workers: $\workerset = \{\worker{r}\St r = 1,\dots,\numworkers \}$. Every worker $r$ possesses a set of skills ($\worker{r}\subseteq\skillset$). Similarly to the tasks, we use $\worker{r}$ to denote both the worker and his/her skills. Moreover, each worker has a hiring cost, and belongs alternately to one of two not overlapping classes. 

\spara{Classes.} The workforce is split in two not overlapping classes $C = \{class1, class2\}$, for example women and men.

\spara{Coverage of tasks.} Whenever task $J \subseteq\skillset$ arrives, an algorithm has to assign one or more workers to it, i.e., a \emph{team}. We say that $J$ can be \emph{completed} or \emph{covered} by a team $\solution\subseteq\workerset$ if for every skill required by $J$, there exists at least one worker in $\solution$ who possesses this skill: $J \subseteq \cup_{\aworker\in\solution}\aworker$. We assume that for every skill in the incoming task there is at least one worker possessing that skill, so all tasks can be covered.

\begin{table}[t]
\caption{Notation}
\label{tbl:notation}
\centering\small\begin{tabular}{cl}
\toprule
$\skillset$	& Set of skills, size $\numskills$ \\
$\taskset$	& Set of tasks \\
$\workerset$	& Set of workers, size $\numworkers$. \\
		& $\worker{r}_\ell=1$ if worker $r$ possess skill $\ell$, 0 otherwise\\
$\pool{\ell}$	& Subset of workers possessing skill $\ell$ \\
$C$ & the two classes $\{class1, class2\}$ \\
\bottomrule
\end{tabular}
\end{table}

\subsection{Problem Definition}

We now define the problem that we study:

\begin{problem}[The Fair Team Formation problem]\label{problem:theproblem} There exists a set of skills \skillset. We have a pool of workers \workerset, where each worker $\worker{r} \in\workerset$ is characterised by \begin{inparaenum}[(1)] a subset of skills $\worker{r}\subseteq\skillset$, a hiring cost $c_r\in\mathbb{R}_{\ge0}$, and belongs alternately to one of two not overlapping classes, $C = \{class1, class2\}$.
\end{inparaenum}
Given a task $\atask\in\taskset$, the goal is to design an algorithm that, when task $\atask$
arrives, decides which workers to hire such that all the tasks are covered by the workers who are hired, the total cost paid over all the tasks is minimised, and the team formed is made up of the same number of workers from both classes, $C = \{class1, class2\}$. 
\end{problem}

One special case of the \textit{Fair Team Formation problem}, where no fairness constraints are imposed, is the \textit{Weighed Set Cover problem}. This problem can be effectively addessed through a greedy approach (see~\cite[Chapter 2]{vazirani2013approximation}). As shown by Slavik ~\cite{slavik1997tight}, this greedy algorithm has an approximation ratio of $\log n -\log\log n +\Theta(1)$ ~\cite{slavik1997tight}. Unfortunately, this result does not hold true for the Fair Set Cover that is the algorithmic core behind the Fair Team Formation problem.


\section{Inapproximability \& Lower-Bound}\label{sec:useforfree}
\newcommand{\poolUnhired}[1]{\ensuremath{P^{\nonhiredworkers}_{#1}}\xspace}
\newcommand{\poolUnhiredUseful}[1]{\ensuremath{P^{\nonhiredworkers}_{#1}}\xspace}

First, we will show that the Fair Team Formation problem (i.e. the Fair Weighted Set Cover problem) is inapproximable, then we will present two different lower-bounds that we can easily calculate, and use later on to evaluate the quality of the solutions found by our four algorithms. 

\subsection{Inapproximability of the Fair Set Cover Problem}
\label{subsec:simple-naive}

Cover problems on hyper-graphs $H(V, E, w)$ aim to find a subset $S \subset E$ such that $v \in \cup_{S_i \in S} S_i$ for every $v \in V$ and \textit{w}(S) is minimised. The vertex cover problem is a special case where we are given a graph $G(V', E')$ and aim to find a subset $S' \subset E'$ such that every edge $e \in E$ is incident to at least one node of $S'$. In terms of hyper-graphs, $V$ corresponds to $E'$ and each hyper-edge $h_v \in H$ corresponds to the set of edges incident to $v$. \\
Given a coloring $c:V\rightarrow \{red,blue\}$ of $G$, we consider a set of vertexes $S\subset V$ to be fair, if $|S\cap \text{RED}| = |S\cap \text{BLUE}|$.
The fair vertex cover problem consists of finding a  minimum vertex cover under the constraint that it is fair. Note that unlike the unconstrained fair vertex cover, such a set may not exist in general.
Similarly, given a coloring of the sets $c:E\rightarrow \{red,blue\}$ the fair set cover problem consists of finding a minimum set cover $S\subset E$ such that $|S\cap \text{RED}| = |S\cap \text{BLUE}|$.
We note that generally fair covers need not exist.
This feature will allow us to show the following impossibility result.

\begin{theorem}
Computing any finite approximation of the fair vertex cover problem is NP-hard.
\end{theorem}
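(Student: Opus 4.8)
The strategy is to convert the remark that a fair vertex cover need not exist into a hardness reduction: I will argue that even \emph{deciding} whether a given colored graph admits some fair vertex cover is NP-hard. This implies the theorem, because any algorithm with a finite approximation factor — a constant, or even a function of the input size — must by definition output a feasible fair vertex cover whenever one exists, and whether a returned set $S$ is feasible (it covers every edge and satisfies $|S\cap\text{RED}|=|S\cap\text{BLUE}|$) is checkable in polynomial time. Hence one could run the approximation algorithm and accept iff its output is a genuine fair vertex cover, thereby deciding the existence question in polynomial time; so NP-hardness of the existence question yields the theorem.

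For the existence question I would reduce from the decision version of ordinary \textsc{VertexCover}: given a graph $G=(V,E)$ and an integer $k$, does $G$ have a vertex cover of size at most $k$? We may assume $k<|V|$, since otherwise $V$ itself is a cover of size at most $k$ and the instance is trivially ``yes''. Build the colored graph $G'$ by keeping $G$ unchanged, coloring every vertex of $V$ red, and adding $k$ fresh isolated vertices colored blue. This takes polynomial time, and $G'$ has exactly the edges of $G$.

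It remains to show that $G'$ admits a fair vertex cover if and only if $G$ has a vertex cover of size at most $k$. If $S$ is a fair vertex cover of $G'$, then $S\cap\text{RED}$ covers every edge of $G'$, hence every edge of $G$, so it is a vertex cover of $G$; moreover $|S\cap\text{BLUE}|\le k$ since only $k$ blue vertices exist, and fairness forces $|S\cap\text{RED}|=|S\cap\text{BLUE}|\le k$, giving a vertex cover of $G$ of size at most $k$. Conversely, if $G$ has a vertex cover $C$ with $j:=|C|\le k$, then $C$ consists of $j$ red vertices of $G'$ and still covers all edges of $G'$ (the added vertices are isolated); adjoining to $C$ any $j$ of the $k$ blue vertices yields a vertex cover of $G'$ with $j$ red and $j$ blue vertices, hence a fair one. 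This establishes the equivalence, so the existence problem is NP-hard, and the theorem follows from the first paragraph.

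The one step deserving careful wording is the passage from ``no finite approximation'' to ``existence is NP-hard'': an approximation algorithm might, on an infeasible instance, halt with some invalid output rather than declare infeasibility. This causes no trouble precisely because feasibility of a candidate set is polynomial-time verifiable, so the decision procedure sketched above is correct no matter what the algorithm does on infeasible inputs; everything else in the argument is routine.
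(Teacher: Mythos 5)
Your proposal is correct and follows essentially the same route as the paper: reduce from the decision version of \textsc{VertexCover} by coloring $V$ red and adding $k$ blue isolated vertices, observe that a fair vertex cover exists iff $G$ has a vertex cover of size at most $k$, and note that any finite-factor approximation algorithm would in particular decide existence. You merely spell out the equivalence and the feasibility-checking step in more detail than the paper does.
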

 
 \begin{proof}
Let $G(V, E)$ be a graph, where we consider $V$ to be red. Given an integer $k$, it is NP-hard to determine whether there exists a vertex cover of size at most $k$~\cite{GareyJ79}. We add $k$ blue vertexes $V'$.
If there exists a fair vertex cover in $G'(V\cup V',E)$, then it can consist of at most $k$ blue vertexes. Since any finite approximation of the fair vertex cover algorithm in particular determines the existence of a fair vertex cover, it also solves the decision problem of vertex cover.
Hence, computing any finite approximation of the vertex cover is NP-hard.
\end{proof}
 
\begin{corollary}
Computing any finite approximation of the fair set cover problem or the fair group Steiner tree problem is NP-hard. 
\end{corollary}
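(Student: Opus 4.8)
The plan is to observe that the corollary is a direct consequence of the theorem, obtained via two standard \emph{colour-preserving} reductions, so that no new idea beyond the theorem is needed.

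\smallskip\noindent\emph{Fair set cover.} Recall from the discussion above that the fair vertex cover instance $G'(V\cup V',E)$ built in the proof of the theorem is literally a fair set cover instance: take $E$ as the ground set and, for each vertex $v\in V\cup V'$, let $S_v$ be the set of edges incident to $v$ (so $S_v=\emptyset$ for the $k$ padding vertices in $V'$). Colouring $S_v$ with the colour of $v$, a fair set cover of this instance is exactly a fair vertex cover of $G'$. Hence the reduction already shows that deciding whether a fair set cover exists is NP-hard; since any algorithm computing a finite approximation of fair set cover must in particular decide feasibility (it returns a finite-cost cover iff one exists), computing any finite approximation of fair set cover is NP-hard.

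\smallskip\noindent\emph{Fair group Steiner tree.} Here I would compose the above with the textbook polynomial reduction from weighted set cover to group Steiner tree. Given sets $S_1,\dots,S_t$ over a ground set $U$, build a graph with a root $\rho$; a selector vertex $a_i$ for each $S_i$ joined to $\rho$ by an edge of weight $w(S_i)$; an element vertex $b_u$ for each $u\in U$ joined to $a_i$ by a zero-weight edge whenever $u\in S_i$; and one group $\{b_u\}$ per $u\in U$. A finite-weight tree connecting $\rho$ to one vertex of each group corresponds to a cover of $U$, with weight equal to the cover's cost. Colouring each selector vertex $a_i$ with the colour of $S_i$ and leaving $\rho$ and the $b_u$ uncoloured, the set of $a_i$'s used by the tree is exactly the chosen collection of sets, so a fair group Steiner tree is precisely a fair set cover. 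Combining with the first part, any finite-ratio approximation of fair group Steiner tree solves an NP-hard decision problem and is therefore NP-hard.

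\smallskip\noindent The main point to get right is that fairness transfers cleanly across both reductions: one must check that the padding/dummy gadgets (the isolated vertices $V'$, the root $\rho$, the element vertices $b_u$) contribute nothing to coverage/connectivity and do not enter the colour-balance condition, so that a balanced solution exists upstream iff one exists downstream. I also need to pin down the precise notion of "fair" for group Steiner tree (balancing the colours of the chosen selector vertices) and to note, exactly as in the theorem, that the argument only needs the biconditional ``a fair solution of finite cost exists iff the source instance is a yes-instance,'' since a finite-ratio approximation algorithm doubles as a feasibility oracle. I do not expect a genuine obstacle; the entire content is in verifying that these structure-preserving reductions leave the balance constraint intact.
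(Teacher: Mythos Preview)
Your proposal is correct and follows the paper's approach: the paper's entire proof is the one-line observation that both problems contain vertex cover as a special case (citing \cite{GareyJ79,ReichW89}), so the theorem applies immediately. You spell out the containment explicitly---and for group Steiner tree you route through set cover rather than directly from vertex cover---but the substance is the same; your only genuine extra obligation, which you correctly flag, is fixing a definition of ``fair'' for group Steiner tree so that the uncoloured root and element vertices do not interfere with the balance count.
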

\begin{proof}
Both problems contain the vertex cover problem as a special case~\cite{GareyJ79,ReichW89}.
\end{proof}

Finally, it is worth noting that for the unweighted version of the Fair Set Cover problem (i.e. all workers have the same cost), and under the \textit{enough workers} assumption, we can build a simple algorithm whose approximation factor is equal to $|C|H(|T|)$. 

\subsection{Lower-bound}

When trying to solve an instance of the Fair Set Cover problem, we are often unable to calculate the value of the optimal solution in reasonable time;  therefore, we are forced to use algorithms that find only a suboptimal solution to the problem. For this reason, it is important to have a lower-bound which we are sure that the value of the optimum would never go below. Obviously, a first really trivial lower-bound (TLB) is represented by the cost of the solution we obtain when the Greedy Set Cover is applied to the Fair Set Cover instance (after eliminating the fairness constraints), divided by its approximation factor; namely: 

\begin{equation}
TLB = \frac{Cost(Greedy Set Cover Solution)}{\log(n) - \log(\log(n)) + 3. + \log(\log(32)) - \log(32)}
\end{equation}  

\spara{A Lower-Bound from the Relaxed LP formulation of the Fair Set Cover problem.} A computationally feasible and mathematically elegant way to calculate a better lower-bound for the Fair Set Cover problem is to solve its relaxed Linear Programming formulation. In a nutshell, we formulate the Fair Team Formation problem as an Integer Linear Programming problem, and then we relax its constraints. In this way, we obtain a Linear Programming problem that is solvable in polynomial time and whose solution always costs no more than the optimal solution that we would get if we were able to solve the integer linear programming. 
The Relaxed Linear Programming formulation of the Fair Set Cover problem is the following:


\begin{framed}
\label{box:LPF}
Relaxed Linear program for  the Fair Set Cover problem:
$$
\begin{cases} 
min \quad \displaystyle \sum_{i = 1}^{|W|} c_ix_i &\\ 
s.t. \quad \displaystyle \sum_{i: s \in W_i} x_i \geq 1  \quad & \forall s \in T \\ 
and \quad x_i \in \left[0, 1 \right] &  \forall i \in \{1, ..., |W|\} \\
and \quad \displaystyle \sum_{i = 1}^{|W|} k_ix_i = 0
\end{cases} 
$$
Where $x_i$ assumes either value 0 or 1, depending on whether the $i_{th}$ worker is hired or not; $c_i$ is the worker $i$ hiring cost, and $k_i$ is equal to -1 if worker belongs to $class1$, or to 1 if he belongs to $class2$. 
\end{framed}

\section{The Fair Team Formation problem}\label{sec:salarynew}
Given the previous restrictive result, in this section, we provide four algorithms to solve the Fair Team Formation problem. Considering that the Fair Team Formation problem has a lot in common with the Set Cover problem, it seemed natural to start from a reasoning similar to the one behind the Greedy Set Cover algorithm. Therefore, algorithms~\ref{alg:padding},~\ref{alg:alternating},~\ref{alg:pairs} are partially based on the Greedy Set Cover algorithm, while algorithm~\ref{alg:rounding} is a rounding algorithm based on the linear programming formulation of the Fair Team Formation problem. The only assumption we made is that there is always a team of workers that together have all the necessary skills to complete the task we are handling. In other words, the task is always coverable.

\spara{Fair Padding Greedy Set Cover algorithm.} The first algorithm we came up with is a simple extension of the Greedy Set Cover algorithm where the cheapest workers of the class whose cardinality is lower are added to make the team fair. Algorithm~\ref{alg:padding} shows its pseudocode.

\begin{algorithm}[H]
    \caption{FairPaddingGreedySetCoverAlgorithm}
    \label{alg:padding}
    \textbf{Input}: $(\textbf{W}, J)$. \\
    \textbf{Ouput}: FairTeam $W \subseteq \textbf{W}$.
 \begin{algorithmic}[1]
    \STATE $W_0 \leftarrow$ GreedySetCoverAlgorithm(W, J)
    \IF{$W_0$ is not balanced (i.e. different number of workers from the two classes)}
        \STATE $W_1 \leftarrow W_0\cup GetCheapestWorkers(W, W_0, MinorityClassCard.)$
         \STATE Return $W_1$
    \ELSE{}
        \STATE Return $W_0$
    \ENDIF 
\end{algorithmic}
\end{algorithm}

The time complexity of this algorithm is equal to the time complexity of the Greedy Set Cover algorithm, namely: $O(|\textbf{W}||J|^2)$.

\spara{Fair Alternating Greedy Set Cover algorithm.} Let's start by defining the marginal utility of each worker (WMU) as:

\begin{equation}
WMU = \frac{WorkerCost}{|SetOfWorkerSkills \cap SetOfTaskSkillsNotCoveredYet|}
\end{equation}

\medskip

Heuristically, at each stage, the AlternatingGreedySetCoverAlgorithm chooses the worker with the lower marginal utility alternating the class of workers within which it picks. Algorithm~\ref{alg:alternating} shows its pseudocode.

\begin{algorithm}[H]
    \caption{FairAlternatingGreedySetCoverAlgorithm}
    \label{alg:alternating}
     \textbf{Input}: $(W, J)$. \\
     \textbf{Ouput}: FairTeam $W \subseteq \textbf{W}$.
 \begin{algorithmic}[1]
    \STATE $W_1\leftarrow$ AlternatingSetCover(W, J, StartingClass = 1)
    \IF{$W_1$ is not balanced (i.e. different number of workers from the two classes)}
        \STATE $W_1 \leftarrow W_1 \cup GetCheapestWorkers(W, W_1, MinorityClassCard.)$
    \ENDIF
    \STATE $W_2$ $\leftarrow$ AlternatingSetCover(W, T, StartingClass = 2)
    \IF{$W_2$ is not balanced (i.e. different number of workers from the two classes)}
        \STATE $W_2 \leftarrow W_2 \cup GetCheapestWorkers(W, W_2, MinorityClassCard.)$
    \ENDIF
    \STATE Return the cheapest team between $W_1$ and $W_2$
\end{algorithmic}
\end{algorithm}

Also in this case, the time complexity is: $O(|\textbf{W}||J|^2)$.

\spara{Fair Pairs Greedy Set Cover algorithm.} Algorithm~\ref{alg:pairs} is particularly simple and intuitive. Essentially, it is the application of the Greedy Set Cover algorithm to all possible pairs of workers. This idea has been suggested by~\cite{Fairlets}. Algorithm~\ref{alg:pairs} shows its pseudocode.

\begin{algorithm}[H]
    \caption{FairPairsGreedySetCoverAlgorithm}
    \label{alg:pairs}
     \textbf{Input}: $(W, J)$. \\
     \textbf{Ouput}: FairTeam $W \subseteq \textbf{W}$.
 \begin{algorithmic}[1]
    \STATE $W_{Pairs} \leftarrow$ PairsGenerator(W)
    \STATE $W_0 \leftarrow$ CoupleGreedySetCover($W_{Pairs}$, J)
    \STATE Return $W_0$
\end{algorithmic}
\end{algorithm}

Unlike the previous three algorithms, in this case the time complexity is: $O(|\textbf{W}|^2|J|^2)$. 
The $|\textbf{W}|^2$ factor is due to the fact that the greedy algorithm for the set cover problem has as input the set of all unordered couples of workers.

\spara{Relaxed Fair Set Cover Rounding algorithm.} Algorithm~\ref{alg:rounding} solves the relaxed linear programming formulation of the Fair Team Formation problem assigning to each worker a real number between 0 and 1: this number could be interpreted as the worker's probability to be hired. Then, it continues by creating random teams of workers using these probabilities until it finds a team that is both fair and able to complete the task. 

\begin{algorithm}[H]
    \caption{RelaxedFairSetCoverRoundingAlgorithm}
    \label{alg:rounding}
     \textbf{Input}: $(\textbf{W}, J)$. \\
     \textbf{Ouput}: FairTeam $W \subseteq \textbf{W}$.
 \begin{algorithmic}[1]
    \STATE $Hiring_{ProbabilityVector}\leftarrow$FairTeamFormationRelaxedLP(\textbf{W}, J)
    \STATE $\textbf{W}_{Sorted} \leftarrow SortAccordingToProbabilityVector(\textbf{W})$
    \WHILE{$\neg$(W balanced $\wedge$ task skills are all covered)}
        \STATE $W \leftarrow EmptyTeam$
        \FOR{$w \in \textbf{W}_{Sorted}$}
	\STATE add w to W with probability equal to $Hiring_{ProbabilityVector}(w)$
	\IF{(W balanced $\wedge$ task skills are all covered)}
	\STATE Return W
	\ENDIF
    \ENDFOR
    \ENDWHILE
    \STATE Return W
\end{algorithmic}
\end{algorithm}

\section{Experiments}\label{sec:experiments}
In this section, we will present some experiments that we ran on a real dataset to evaluate the algorithms' performance by comparing their cost.

\subsection{The Freelancer dataset}\label{subsec:datasets}

To create a large pool of tasks and workers needed to test the algorithms, we decided to use a dataset obtained from \textit{Freelancer.com}: the largest online marketplace for outsourcing in its category according to data from Alexa (Feb. 2018). The input data that we obtained contain anonymised profiles from people registered as freelancer in this marketplace. This includes their self-declared sets of skills, as well as the average rate that they charge for their services. Data have been cleaned to remove skills that were not possessed by any worker, and skills that were never required by any task. Concerning tasks, we had access to a large sample of tasks commissioned by buyers in the marketplace. Some relevant characteristics of our data are summarised in table~\ref{tab:title}.

\begin {table}[H]
\caption {Freelancer Dataset Characteristics} \label{tab:title} 
\begin{center}
\begin{tabular}{llr}
\hline
\textbf{Dataset} & \textbf{Freelancer} \\
\hline
Skills (\textit{m})    &  175     \\
Workers (\textit{n})  &  1,211       \\
Tasks (\textit{T})  &  992       \\
...distinct  &  600       \\
\hline
Average Skills/Worker & 1.45 \\
Average Skills/Task & 2.86 \\
\end{tabular}
\end{center}
\end {table}

As shown above, our dataset contains 992 tasks, but since many of them require exactly the same set of skills we decided to take into account only the 600 distinct tasks. The average number of skills per worker is 2.86 and the maximum is 6 skills.

\spara{Experiments Design.} In the first place, we split the 1211 Freelancer workers into two different classes, and we considered six different compositions of the two groups. In brief, we used a random procedure to select respectively 10\%, 30\%, and 50\% of all workers, and we assigned these workers to one of the two classes, while the remaining to the other. After that, for each of these configurations, we ran the four algorithms we designed to solve the Fair Team Formation Problem, 
obtaining fair teams to complete each of the 600 tasks.

\subsection{Experiments}

As shown in figure \ref{fig:cost_random_weighted}, we observe a shift to the left in the distribution, as the workforce becomes more balanced. In most cases the price of the fair team is no more than four times the value of the best lower bound (LB), although for a few tasks the FairAlternatingGreedySetCoverAlgorithm finds solutions that are even eight times the value of the lower bound. It is also worth noting that the progressive balancing of workers' colours has a significant effect on all algorithms, except for the RelaxedFairSetCoverRoundingAlgorithm whose cost (cost\_RLP) distribution remains more or less consistent as the workforce changes. Moreover, from figure \ref{fig:cost_random_weighted} we can see that all distributions are concentrated around a value of 2, indicating that our algorithms have an heuristic approximation ratio of 2, at least on this specific dataset. \\
In summary, histograms in figure \ref{fig:best_random} give us some important information about the overall algorithms performance, obtained by choosing the less expensive fair team among the four on a case-by-case basis. The balance between the two classes of workers does not influence the cost distributions suggesting that some algorithms are able to efficiently address the problem of strong unbalances between the two groups of workers; second, we can observe that the best solution cost is never more than four times the value of its best lower bound, and it rarely exceeds a factor of two. \\
To conclude, the RelaxedFairSetCoverRoundingAlgorithm beats them all: it was able to find a team whose cost is equal to the best solution cost in no less than 66\% of cases, and with an average success rate of 85\% (all configurations of colours considered). On the contrary, the FairPaddingGreedySetCoverAlgorithm always had the worst overall performance, never reaching a success rate higher than 70\%. 


\begin{figure*}[t] 
\centering
\caption{Distribution of solutions cost over best lower bound for three different class balances.}
\label{fig:cost_random_weighted}
\subfigure[Fair Padding Algorithm]{\includegraphics[width=.23\textwidth]{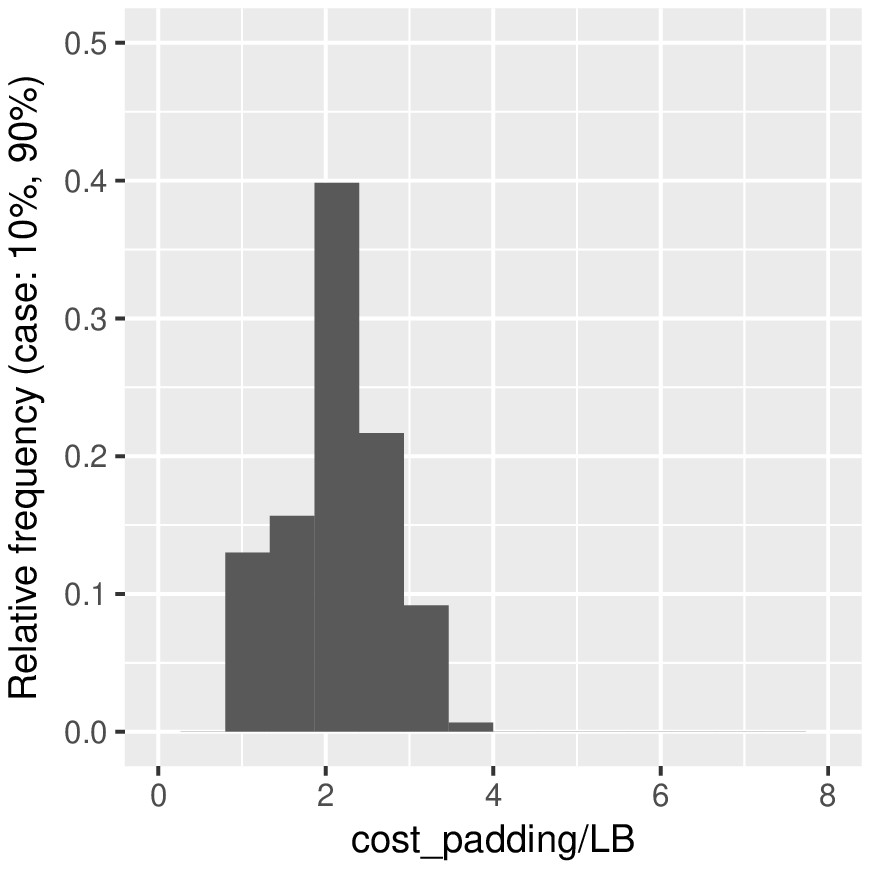}}
\subfigure[Fair Alternating Algorithm]{\includegraphics[width=.23\textwidth]{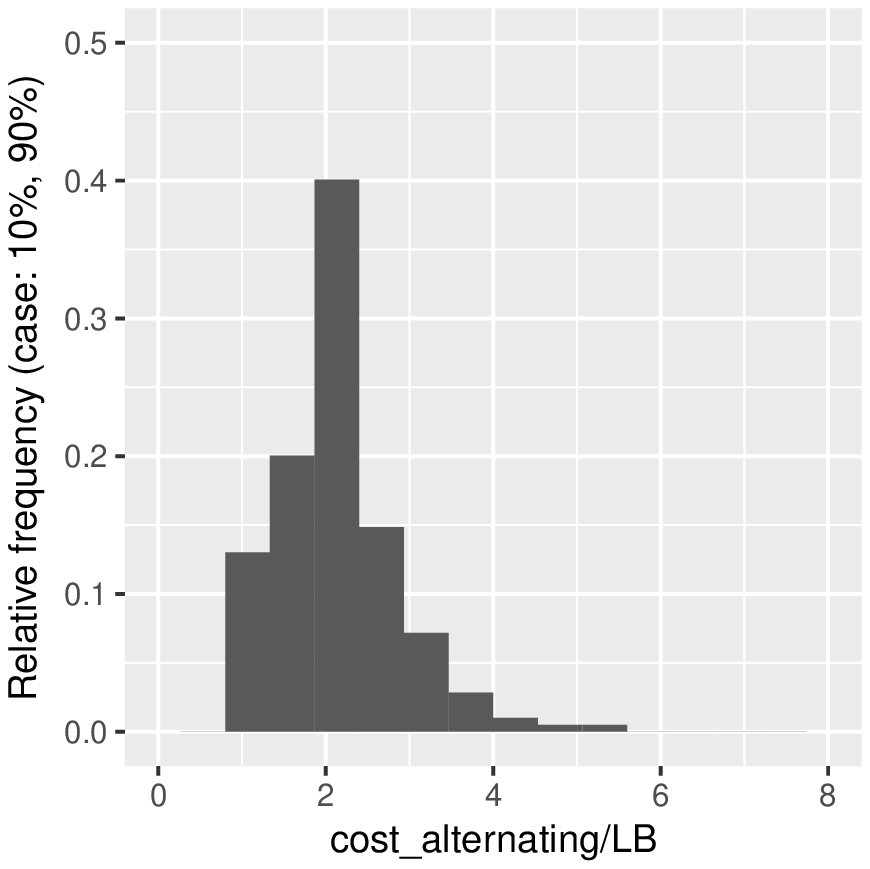}} 
\subfigure[Fair Pairs Algorithm]{\includegraphics[width=.23\textwidth]{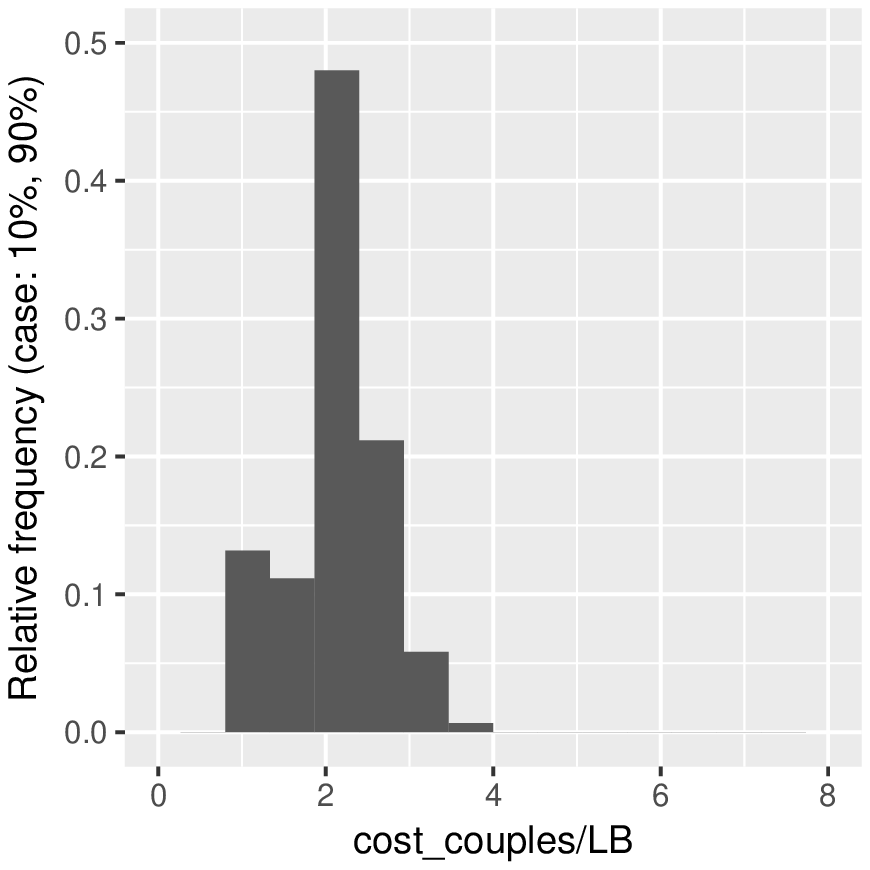}}
\subfigure[Relaxed Fair Rounding Algorithm]{\includegraphics[width=.23\textwidth]{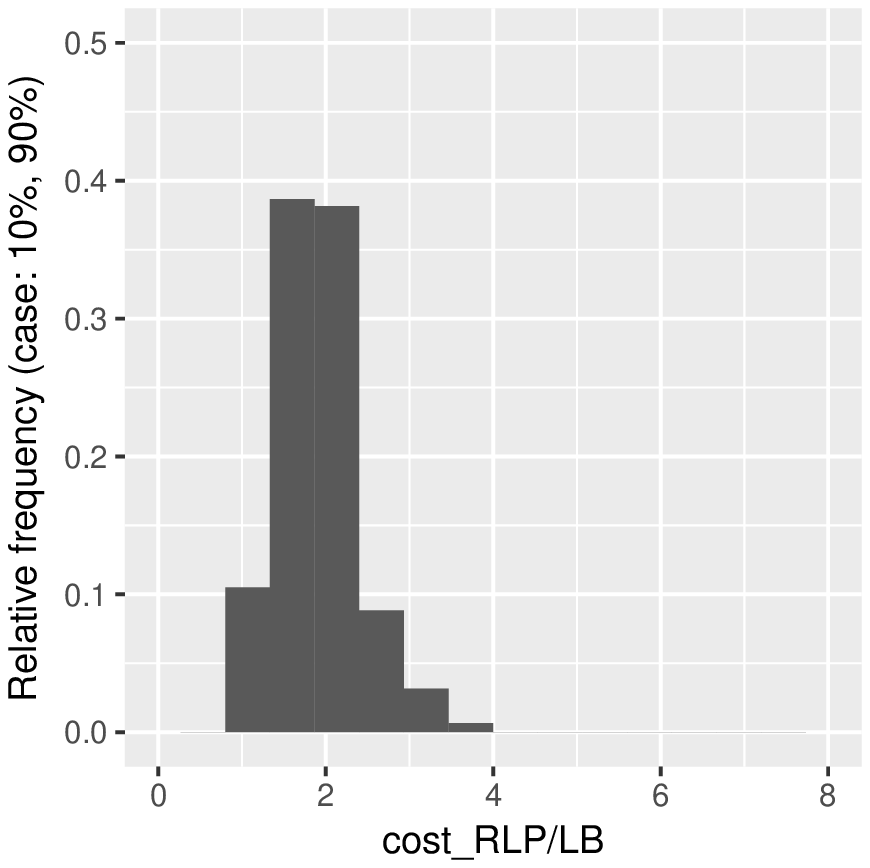}} \\
\subfigure[Fair Padding Algorithm]{\includegraphics[width=.23\textwidth]{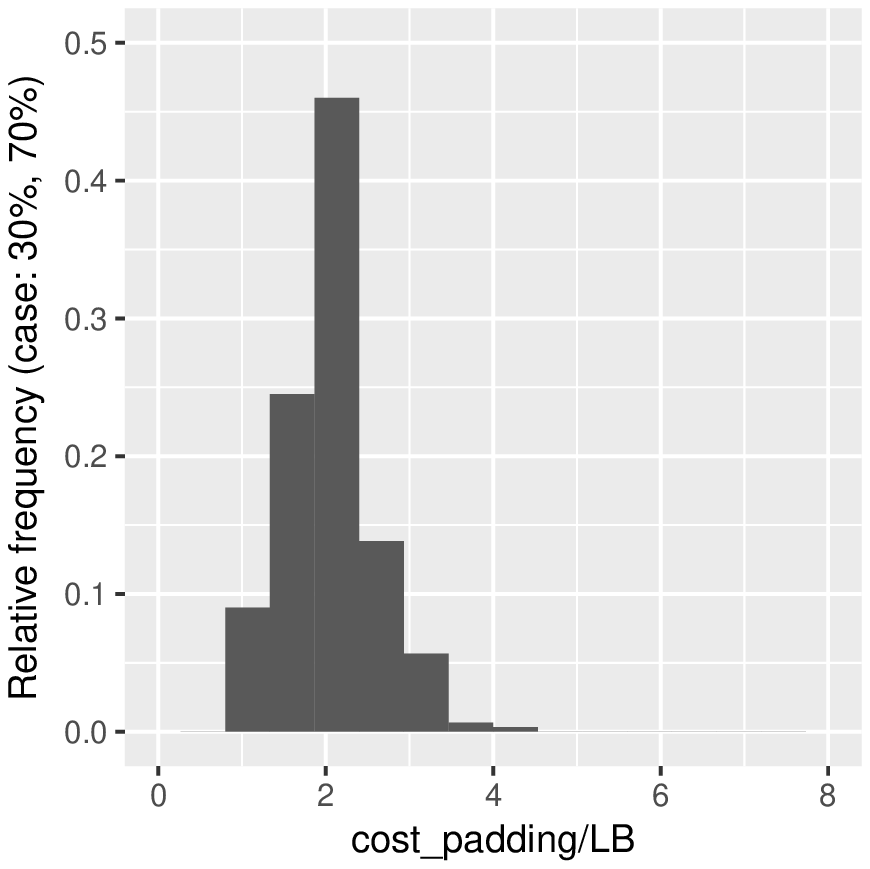}}
\subfigure[Fair Alternating Algorithm]{\includegraphics[width=.23\textwidth]{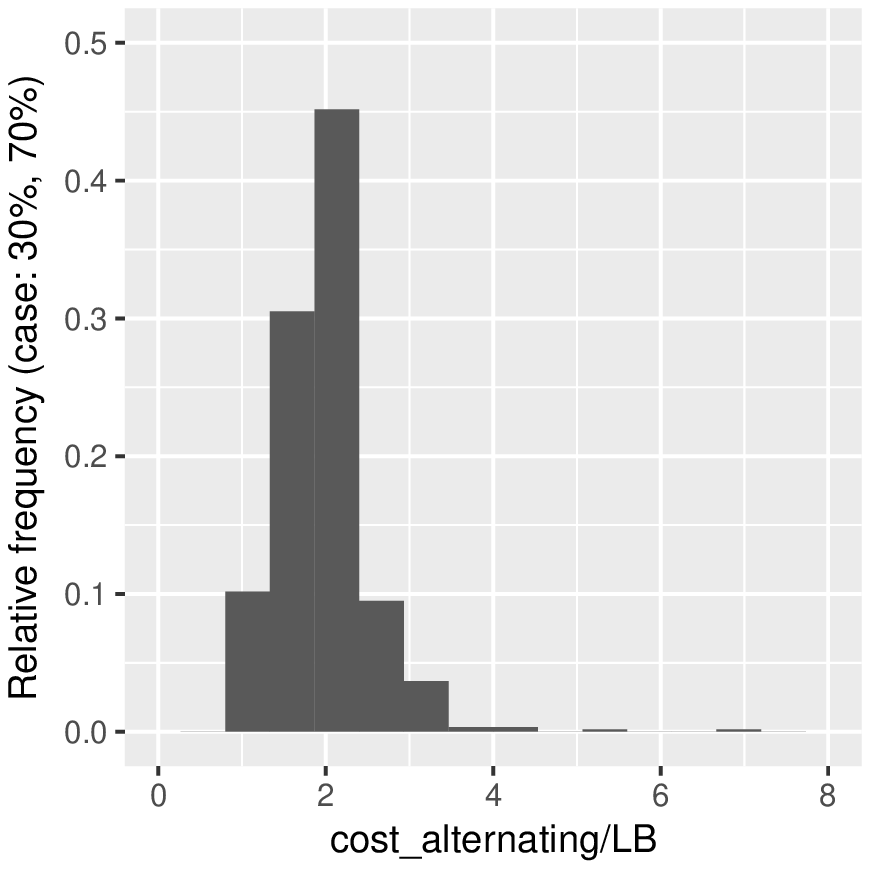}} 
\subfigure[Fair Pairs Algorithm]{\includegraphics[width=.23\textwidth]{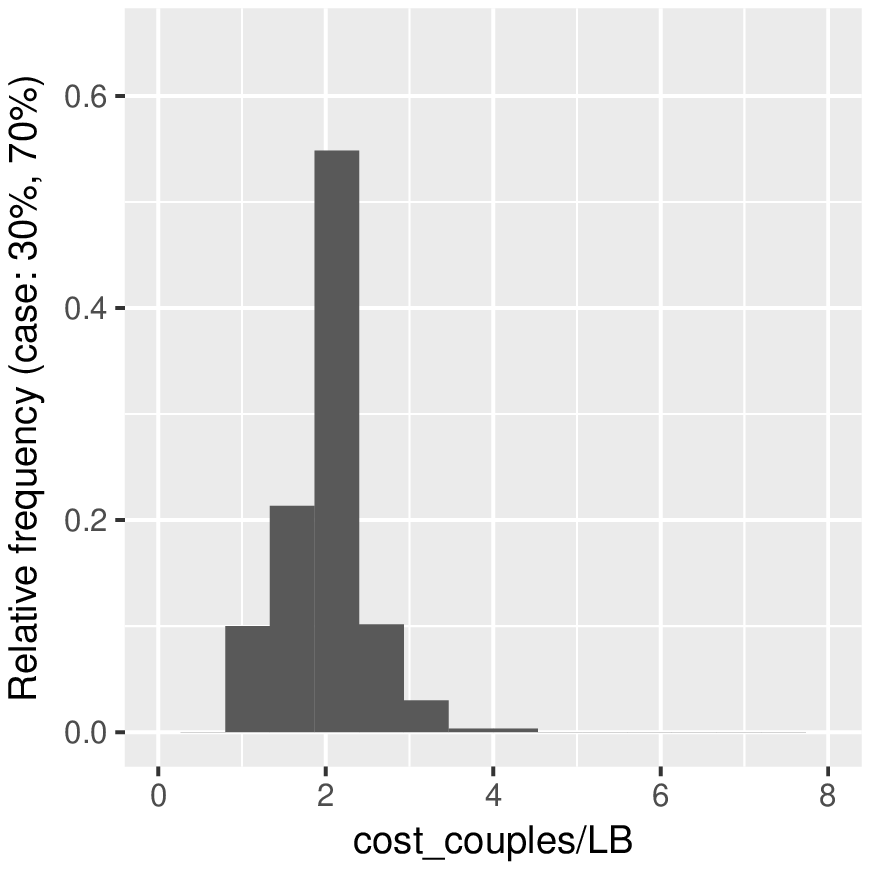}}
\subfigure[Relaxed Fair Rounding Algorithm]{\includegraphics[width=.23\textwidth]{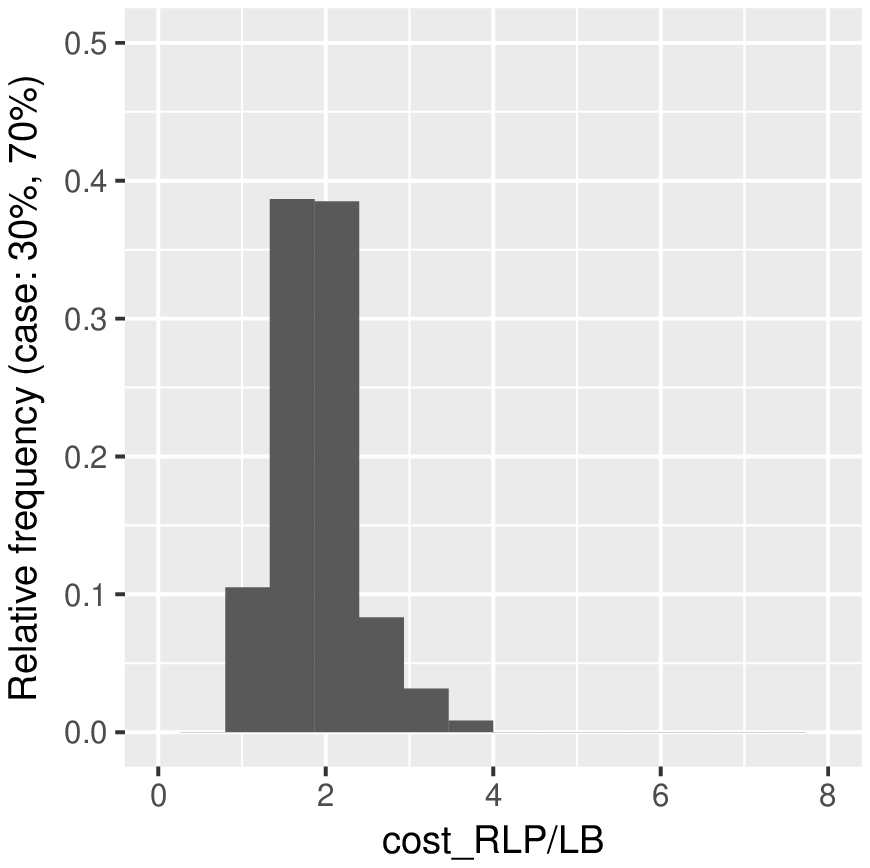}} \\
\subfigure[Fair Padding Algorithm]{\includegraphics[width=.23\textwidth]{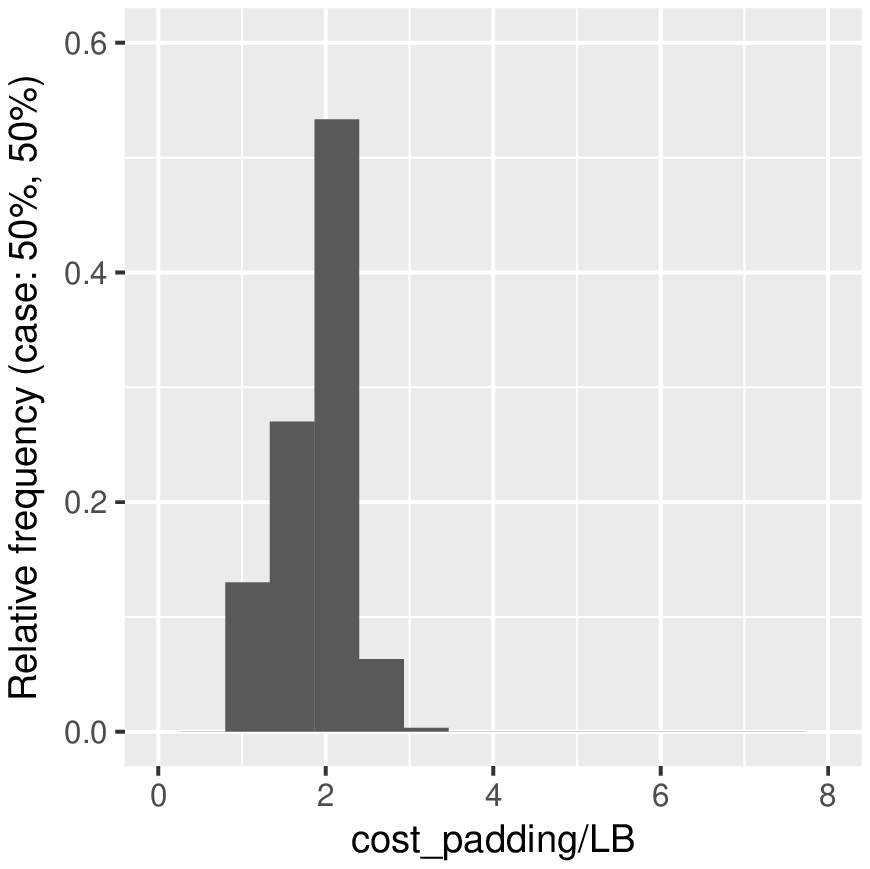}}
\subfigure[Fair Alternating Algorithm]{\includegraphics[width=.23\textwidth]{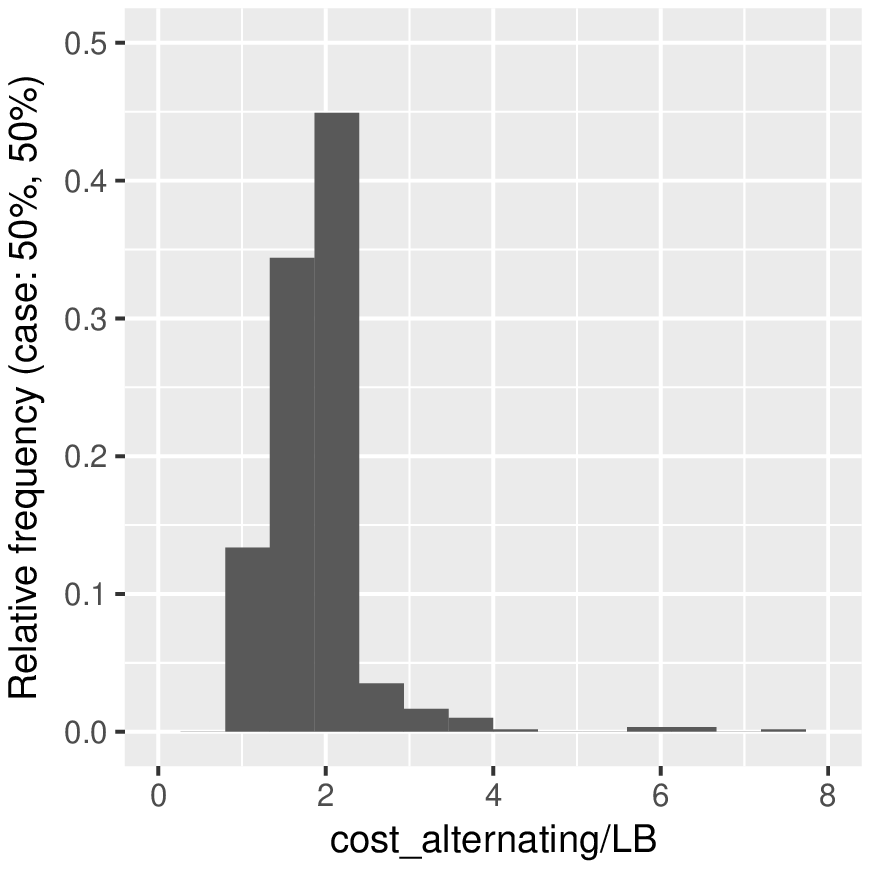}} 
\subfigure[Fair Pairs Algorithm]{\includegraphics[width=.23\textwidth]{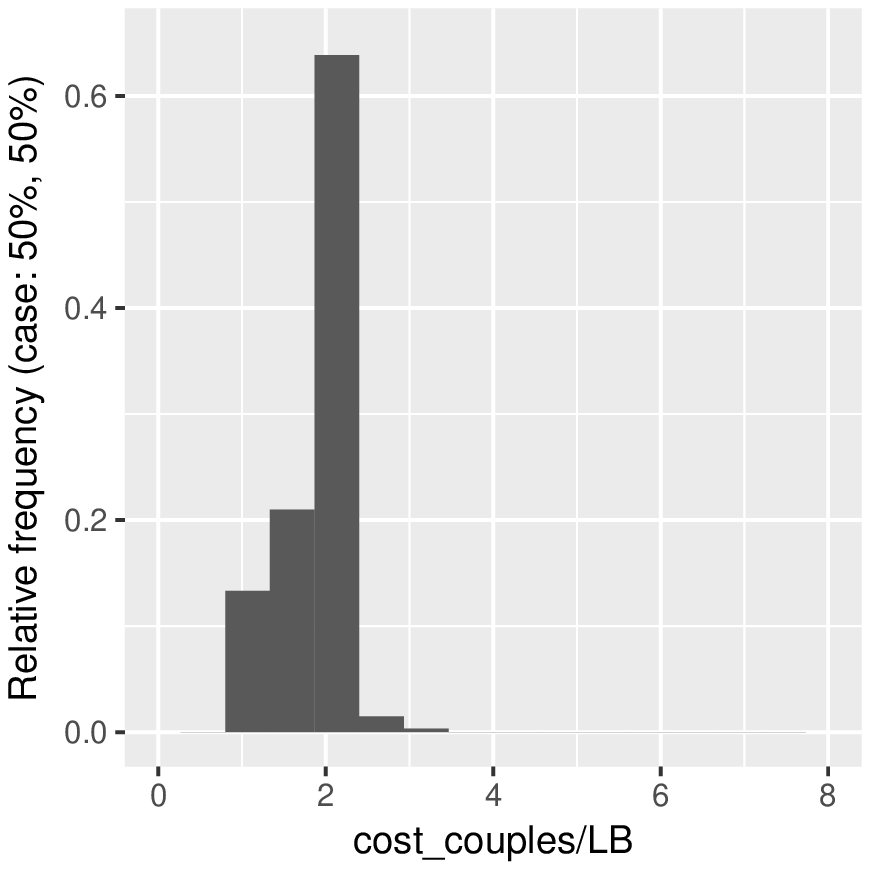}}
\subfigure[Relaxed Fair Rounding Algorithm]{\includegraphics[width=.23\textwidth]{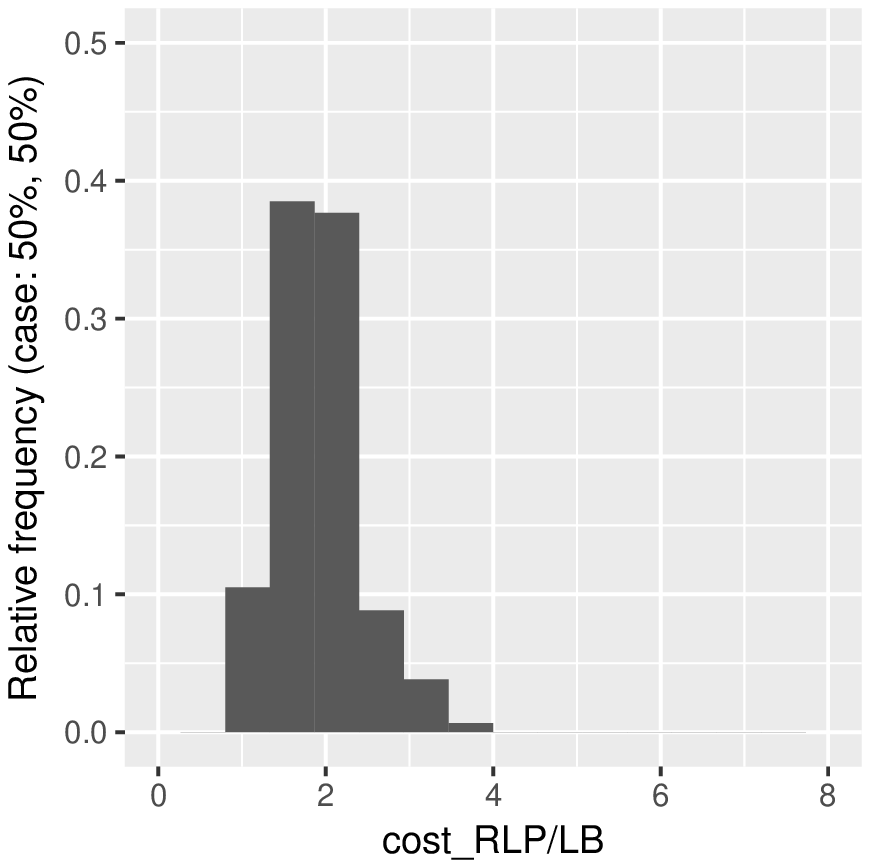}} \\  
\caption{Distribution of best solution cost over best lower bound for three different class balances.}
\label{fig:best_random}
\subfigure{\includegraphics[width=.23\textwidth]{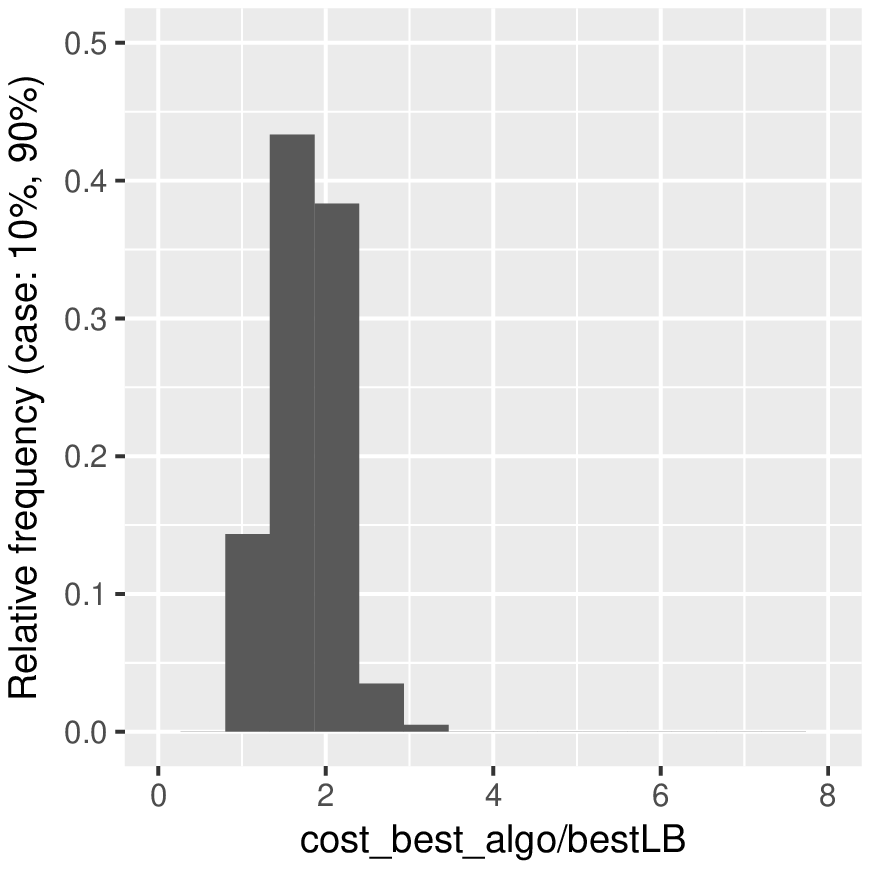}} 
\subfigure{\includegraphics[width=.23\textwidth]{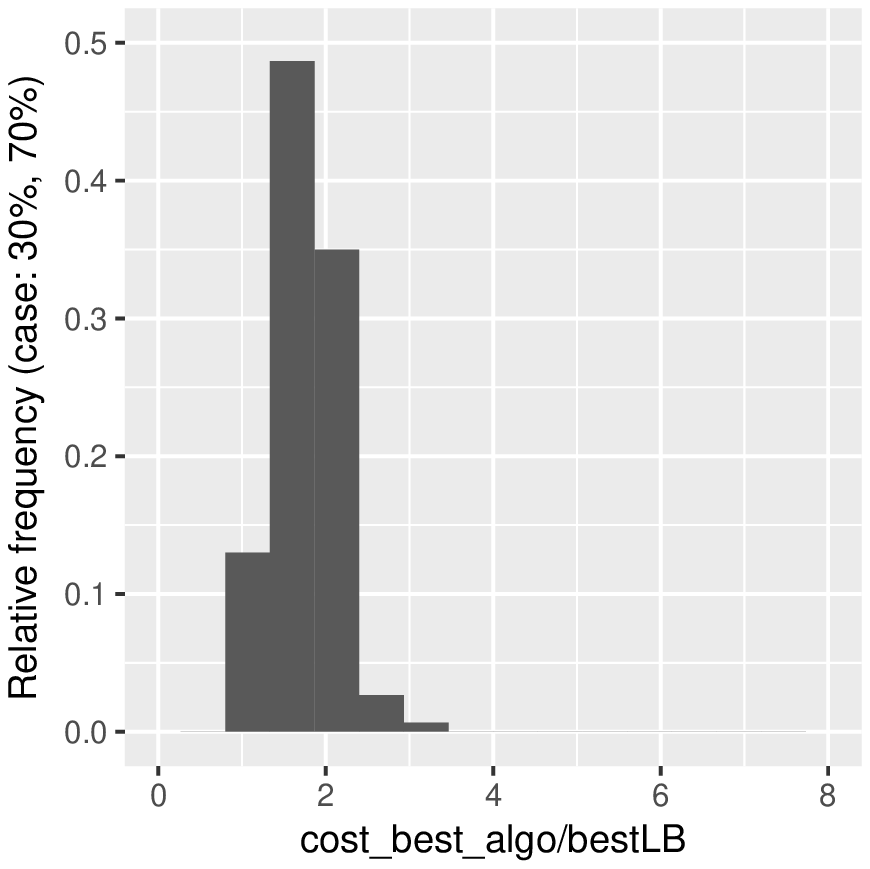}} 
\subfigure{\includegraphics[width=.23\textwidth]{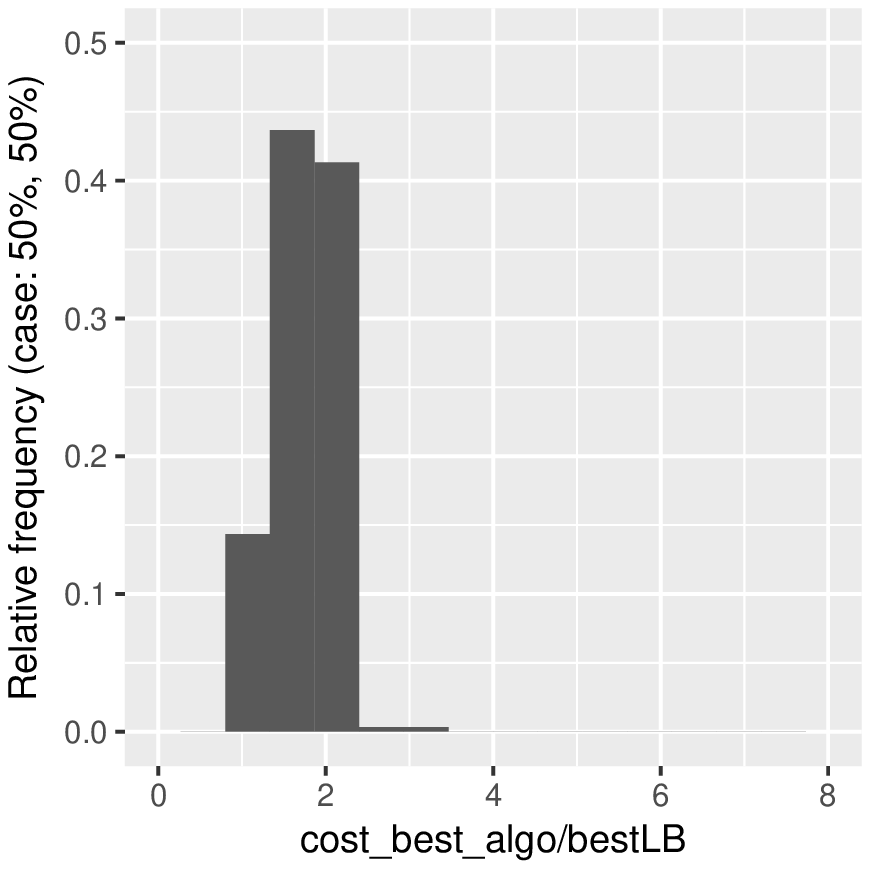}}                                 
\end{figure*}



\section{Conclusions \& Future Work}\label{sec:conclusions}
In this work, we have defined the Fair Team Formation problem, that is a variation of the Set Cover problem where each subset is assigned a colour, and whose goal is to find the cheapest collection of subsets that both covers the input set, and that is made up of the same number of subsets of each colour. \\ 
Despite the discovered inapproximability results, in particular for the Fair Team Formation problem, we have focused our research on the design and implementation of four algorithms for that problem, and we have also tested them on a real dataset. 
From the experiments we conducted on the Freelancer dataset, it turned on that the FairAlternatingGreedySetCoverAlgorithm and the RelaxedFairSetCoverRoundingAlgorithm outperform both the FairPaddingGreedySetCoverAlgorithm and the FairPairsGreedySetCoverAlgorithm in almost every case we considered, both in terms of solutions cost and in terms of solutions size. Overall, among these four algorithms, it seems more reasonable to opt for the RelaxedFairSetCoverRoundingAlgorithm. We can conclude that, even if the problem is not approximable in its weighted version, the algorithms we designed could be effectively used in practical contexts, and are able to find good solutions to many instances of the problem, at least in the limited case presented in the experiments chapter.  \\
Throughout this paper, we assumed that all workers in the workforce can be hired to complete each task; in other words, when creating a team for any task, algorithms can pick team members among all workers who make up the workforce: this is a pretty strong and unrealistic assumption since usually workers have a limited available time; therefore, in the future, it could be interesting to extend this research further by considering a stream of tasks, or by limiting the number of times each worker can be hired. Now, coherently with the scientific literature on Team Formation, another possibility worth of some consideration is the introduction of a social network among workers; this would lead to the emergence of new interesting research questions, such as finding a fair team that minimises the distance among workers.  Finally, to make this research more exhaustive, it could be convenient to study how the behaviour of Fair Team Formation algorithms changes with different datasets. Our experiments were, in fact, limited to tasks of no more than 6 skills, and to a workforce of only 1211 workers. We think that moving forward with this research could lead to some really useful and interesting results that, in turn, could help marketplaces designers, as well as policy makers, to better engineering, managing, and regulating these platforms.


\bibliographystyle{nourlabbrvnat}
\balance 
\bibliography{rent-or-buy}

\end{document}